\definecolor{col2}{HTML}{ffce73}
\definecolor{col3}{HTML}{779e11}
\definecolor{col1}{HTML}{0d5273}
\newcommand{\OhOfB}[1]{\ensuremath{\mathcal{O}(#1)}}
\newcommand{\OhOf}[1]{\ensuremath{\OhOfB{#1}}}
\newcommand{\OhOfExp}[1]{\ensuremath{\smash{\hat{\mathcal{O}}}(#1)}}
\newcommand{\OhOfLarge}[1]{\ensuremath{{\mathcal{O}\left(#1\right)}}}
\newcommand{\ellmax}{\ensuremath{d}}
\newcommand{\Lrelax}{\ensuremath{{L'}}}
\newcommand{\Zq}{\mathbb{Z}_q}
\newcommand{\ceil}[1]{\left\lceil#1\right\rceil}
\newcommand{\absolute}[1]{\left\lvert#1\right\rvert}
\newcommand{\llex}{\prec}
\newcommand{\leqlex}{\preceq}
\newcommand{\tuple}[1]{\left\langle#1\right\rangle}
\begin{document}
\title{LCP-Aware Parallel String Sorting}
%
%
%
\author{Jonas Ellert\inst{1} \and
Johannes Fischer\inst{1} \and
Nodari Sitchinava\thanks{Supported by the National Science Foundation under Grants CCF-1533823 and CCF-1911245}
\inst{2}}
\authorrunning{Ellert et al.}
%
\institute{%
Technical University of Dortmund,
{Dortmund, Germany}\\
\email{jonas.ellert@tu-dortmund.de}, \email{johannes.fischer@cs.tu-dortmund.de} 
\and
University of Hawaii at Manoa,
{Honolulu, Hawaii, USA}\\
\email{nodari@hawaii.edu}}
\maketitle              
\begin{abstract}
When lexicographically sorting strings, it is not always necessary to inspect all symbols. 
For example, the lexicographical rank of \texttt{europar} amongst the strings \texttt{eureka}, \texttt{eurasia}, and \texttt{excells} only depends on its so called \emph{relevant prefix} \texttt{euro}.
The \emph{distinguishing prefix size} $D$ of a set of strings is the number of symbols that actually need to be inspected to establish the lexicographical ordering of all strings.
Efficient string sorters should be \emph{D}-aware, i.e.\ their complexity should depend on $D$ rather than on the total number $N$ of all symbols in all strings.
While there are many $D$-aware sorters in the sequential setting, there appear to be no such results in the PRAM model.
We propose a framework yielding a $D$-aware modification of any existing PRAM string sorter. 
The derived algorithms are work-optimal with respect to their original counterpart: 
If the original algorithm requires $\OhOf{w(N)}$ work, the derived one requires $\OhOf{w(D)}$ work. 
The execution time increases only by a small factor that is logarithmic in the length of the longest relevant prefix.
Our framework universally works for deterministic and randomized algorithms in all variations of the PRAM model, such that future improvements in ($D$-unaware) parallel string sorting will directly result in improvements in $D$-aware parallel string sorting.

\keywords{String sorting \and lexicographical sorting \and parallel \and PRAM \and distinguishing prefix \and longest common prefix \and LCP \and Karp-Rabin fingerprints}
\end{abstract}

\section{Introduction}

The problem of string sorting is defined as follows: Given $k$ strings $s_1, \dots, s_k$ of total length $N=\sum |s_i|$ stored in RAM, and an array $S$ of $k$ pointers to the strings ($S[i]$ points to the memory location of $s_i$), compute a permutation $S'$ of $S$ such that $S'$ lists the strings in lexicographical order ($S'[i]$ points to the lexicographically $i$-th smallest string).
It is commonly known that establishing the lexicographical order on the strings does not necessarily require inspecting all $N$ symbols.
In fact, the rank of a string $s_i$ only depends on its shortest prefix $s_i[1..\ell_i]$ that is not a prefix of any another string.
The \emph{distinguishing prefix size} of the $k$ strings is defined as $D = \sum_{i = 1}^k \ell_i$. In simple words, an algorithm that sorts the strings only needs to \emph{inspect} the $D$ symbols that are part of the distinguishing prefix, while all other symbols are irrelevant for the lexicographical ordering.
In this paper, we present parallel $D$-aware string sorting solutions. 
That is, the time and work complexity of the algorithms depends only on $k$, $D$, and possibly $\sigma$, but not on $N$.
We present algorithms in the PRAM model and consider the following variations of the model (ordered from the weakest to the strongest): EREW, CREW, Common-CRCW, and Arbitrary-CRCW. Observe that algorithms designed for the weaker models can run on the stronger models within the same complexity measures.

\subsection{Related Work}
\label{sec:relwork}

There is a variety of algorithms that aim to efficiently solve the problem of string sorting, most of which belong to one of two classes:
The ones that are based on comparison sorting and generally allow arbitrary alphabets, and the ones that use (ideas from) integer sorting and are usually limited to alphabets of polynomial size $\sigma = N^{\OhOf{1}}$.

If comparison sorting is the underlying technique, the well-known information-theoretical lower bound of $\Omega(k \lg k)$ comparisons applies, such that the fastest possible sequential algorithm cannot take fewer than $\Omega(k \lg k + D)$ operations. 
Ternary quicksort \cite{DBLP:conf/soda/BentleyS97} runs in $\OhOf{k\lg k + D}$ time, and thus matches this lower bound. 
In the Common-CRCW model, J\'aJ\'a et al.~\cite{DBLP:journals/tcs/JaJaRV96} achieve $\OhOf{k \lg k + N}$ work and $\OhOf{\lg^2 k / \lg\lg k}$ time, and also provide a randomized algorithm that requires the same amount of work and $\OhOf{\lg k}$ time with high probability. 
However, a $D$-aware modification of the algorithm cannot easily be derived.

In terms of alphabet-dependent sequential algorithms, we can use radix-sort-like approaches to achieve either $\OhOf{N+\sigma}$ time \cite[Alg.\ 3.2]{DBLP:books/aw/AhoHU74}, or even $\OhOf{D+\sigma}$ time \cite{DBLP:journals/siamcomp/PaigeT87}, where $\sigma$ is the number of different characters.
Hagerup~\cite{DBLP:conf/stoc/Hagerup94} presents an Arbitrary-CRCW algorithm that achieves $\OhOf{N \lg \lg N}$ work and $\OhOf{\lg N / \lg \lg N}$ time, assuming that the alphabet is polynomial in $N$. 
Alternatively, it can be implemented to run in
$\OhOf{N \sqrt{\lg N}}$ work and $\OhOf{\lg^{3/2} N\sqrt{\lg\lg N}}$ time in the CREW model, or $\OhOf{N \sqrt{\lg N\lg\lg N}}$ work and the same time in the EREW model. 
Note that Hagerup's algorithm is based on an algorithm by Vaidyanathan et al. \cite{Vaidyanathan1991} that reduces each string to a single integer by repeatedly merging adjacent symbols. 
Due to the nature of the reduction technique, it always inspects all $N$ symbols, and a $D$-aware modification cannot easily be derived. 

There are practical parallel algorithms that exploit the distinguishing prefix and are fast in practice \cite{Bingmann2013,Bingmann2017,Bingmann2020}; however, we are not aware of any algorithms with $D$-aware complexity bounds in the PRAM model.

\subsection{Our Contributions}

We present a theoretical framework that yields a $D$-aware version of any existing string sorting algorithm.
Particularly, we derive $D$-aware versions of the algorithms by J\'aJ\'a et al. and Hagerup that are work optimal with respect to their original counterparts: 
If the original algorithm requires $w(k, N, \sigma)$ work, then our modification requires $\OhOf{w(k, D, \sigma)}$ work.
Additionally, in case of Hagerup's algorithm, we are no longer limited to polynomial alphabets. 
Generally, the new algorithms are only by a $(\lg d)$-factor slower than the original ones, where $d = \max\{\ell_i \mid 1 \leq i \leq k\}$ denotes the length of the longest relevant prefix.

Our framework is based on the idea of approximating the distinguishing prefix. 
It yields a $2$-approximation of the relevant prefix lengths: For each string $s_i$, we determine a value $L[i] \in [\ell_i, 2\ell_i)$.
In the Arbitrary-CRCW model, this takes expected optimal $\OhOf{D}$ work and $\OhOf{\lg d \cdot (\lg d + \lg k)}$ time with high probability%
. 
In the weaker EREW model, we achieve ${\OhOf{k\sqrt{\lg k}\lg\lg k + D}}$ work and $\OhOf{\lg d \cdot (\lg d + \lg k) + \lg^{3 / 2} k \cdot \lg\lg k}$ time with high probability%
.
An overview of our results is provided in \cref{tbl:results}.

\begin{table}[t]
	\newcolumntype{Y}{>{\centering\arraybackslash}X}
	\setlength\tabcolsep{3.5pt}
	\caption{New results on $D$-aware parallel string sorting.
	The original ($D$-unaware) results are written in gray.
	Whenever the model is annotated with \emph{w.h.p.}, the respective algorithms are successful with high probability $1 - \OhOf{k^{-c}}$ for an arbitrarily large constant $c$.
	We write $\OhOfExp{x}$ to denote \emph{expected} complexity bounds.}
	\newcommand{\upperspace}{\vphantom{$\overset{T}{T}$}}
	\newcommand{\lowerspace}{\vphantom{$\underset{T}{T}$}}
	\newcommand{\tablehead}{\lowerspace}
	\newcommand{\factorshow}{{\lg \ellmax\ \cdot\hspace{-.5em}}}
	\newcommand{\factorhide}{\phantom{\factorshow}}
	\begin{tabularx}{\textwidth}{|Y|l|rl|c|}
		\multicolumn{5}{c}{\tablehead\textbf{a.) Results based on the sorter by Hagerup \cite{DBLP:conf/stoc/Hagerup94}:}}\\
		\hline
		\textbf{Model} & \multicolumn{1}{c|}{\textbf{Work}} & \multicolumn{2}{c|}{\textbf{Time}}& \textbf{Theorem} \\\hline
		\multirow{2}{*}{$\overset{\text{Arbitrary}}{\text{CRCW}}$} & 
		$\OhOf{D \lg \lg \max(D, \sigma))}$ & 
		$\factorshow$ & $\OhOf{\lg D / \lg \lg D + \lg \lg \sigma}$ & 
		\cref{lemma:hagerupfinal}\upperspace\\
		& \color{gray}$\OhOf{N \lg \lg N}$ &
		& \color{gray}$\OhOf{\lg N / \lg \lg N}$ & 
		\color{gray}\cite{DBLP:conf/stoc/Hagerup94} Theorem 4.4\lowerspace\\
		\multirow{2}{*}{CREW} &
		$\OhOf{D \sqrt{\lg D}}$ & 
		$\factorshow$ & $\OhOf{\lg^{3/2} D \sqrt{\lg \lg D}}$ & 
		\cref{lemma:hagerupfinal}\upperspace\\
		& \color{gray}$\OhOf{N \sqrt{\lg N}}$ &
		& \color{gray}$\OhOf{\lg^{3/2} N \sqrt{\lg \lg N}}$ & 
		\color{gray}\cite{DBLP:conf/stoc/Hagerup94} Theorem 4.5\lowerspace\\
		\multirow{2}{*}{EREW} & 
		$\OhOf{D \sqrt{\lg D \lg \lg D}}$ & 
		$\factorshow$ & $\OhOf{\lg^{3/2} D \sqrt{\lg \lg D}}$ & 
		\cref{lemma:hagerupfinal}\upperspace\\
		& \color{gray}$\OhOf{N \sqrt{\lg N \lg \lg N}}$ &
		& \color{gray}$\OhOf{\lg^{3/2} N \sqrt{\lg \lg N}}$ & 
		\color{gray}\cite{DBLP:conf/stoc/Hagerup94} Theorem 4.5\lowerspace\\\hline
		\multicolumn{5}{c}{}\\
		\multicolumn{5}{c}{\tablehead\textbf{b.) Results based on the sorter by J\'aJ\'a et al. \cite{DBLP:journals/tcs/JaJaRV96}:}}\\
		\hline
		\textbf{Model} & \multicolumn{1}{c|}{\textbf{Work}} & \multicolumn{2}{c|}{\textbf{Time}} & \textbf{Theorem} \\\hline
		\multirow{2}{*}{$\overset{\text{Common}}{\text{CRCW}}$} & 
		$\OhOf{k \lg k + D}$ & 
		$\factorshow$ & $\OhOf{\lg^2 k/\lg\lg k}$ &
		\cref{lemma:jajaplain}\upperspace\\
		& \color{gray}$\OhOf{k \lg k + N}$ &
		& \color{gray}$\OhOf{\lg^2 k/\lg\lg k}$ &
		\textcolor{gray}{\cite{DBLP:journals/tcs/JaJaRV96} Theorem 3.1}\lowerspace\\
		\multirow{2}{*}{$\underset{\text{w.h.p.}}{\overset{\text{Common}}{\text{CRCW}}}$} & 
		$\OhOf{k \lg k + D}$ & $\factorshow$ & 
		$\OhOf{\lg k + \lg \ellmax}$ & 
		\cref{lemma:jajarand}\upperspace\\
		& \color{gray}$\OhOf{k \lg k + N}$ &
		& \color{gray}$\OhOf{\lg k}$ &
		\textcolor{gray}{\cite{DBLP:journals/tcs/JaJaRV96} Theorem 5.1}\lowerspace\\\hline
		\multicolumn{5}{c}{}\\
	\end{tabularx}
	
	\begin{tabularx}{\textwidth}{|Y|l|rl|Y|}
		\multicolumn{5}{c}{\tablehead\textbf{c.) General results that hold for any parallel string sorter:}}\\
		\hline
		\textbf{Model} & \multicolumn{1}{c|}{\textbf{Work}} & \multicolumn{2}{c|}{\textbf{Time}}&\textbf{Lemma} \\\hline
		\multirow{2}{*}{$\underset{\text{w.h.p.}}{\overset{\text{Arbitrary}}{\text{CRCW}}}$} & 
		$\OhOfExp{D} + w(k, 2D, \sigma)$ & 
		$\factorshow$ & $\OhOf{\lg k + \lg \ellmax} + t(k, 2D, \sigma)$ & 
		\cref{lemma:lnormalgu}\upperspace\\
		& \color{gray}$\phantom{\OhOfExp{D} + {}}w(k, N, \sigma)\phantom{2}$ &
		& \color{gray}$\phantom{\OhOf{\lg k + \lg \ellmax} + {}}t(k, N, \sigma)$ & 
		\textcolor{gray}{--}\lowerspace\\\hline
		\multirow{3}{*}{$\underset{\text{w.h.p.}}{{\text{EREW}}}$} & 
		$\OhOf{k \sqrt{\lg k}\lg\lg k + D}$ & 
		$\factorshow$ & $\OhOf{\lg k + \lg \ellmax} + \OhOf{\lg^{3/2} k \cdot \lg\lg k}$ & 
		\multirow{2}{*}{\cref{lemma:lnormalhan}}\upperspace\\
		&$\phantom{\OhOfExp{D}}{} + w(k,2D, \sigma)$ && ${\phantom{\OhOf{\lg k + \lg \ellmax}}} + t(k,2D, \sigma)$ &\\
		& \color{gray}$\phantom{\OhOfExp{D} + {}}w(k,N, \sigma)$ &
		& \color{gray}${\phantom{\OhOf{\lg k + \lg \ellmax} + {}}}t(k,N,\sigma)$ & 
		\textcolor{gray}{--}\lowerspace\\\hline
	\end{tabularx}
	\label{tbl:results}
\end{table}

\vspace{.5\baselineskip}

The rest of the paper is structured as follows: 
In \cref{sec:prelim} we introduce the basic notation and definitions regarding the PRAM model and string processing. 
In \cref{sec:generalidea} we explain our approximation scheme for the distinguishing prefix, which we use in \cref{sec:derive:determ} to derive deterministic $D$-aware string sorters. 
By using Karp-Rabin fingerprinting, we can also derive randomized string sorters, and achieve better complexity bounds for our approximation scheme (\cref{sec:fingerprints}). 
We summarize our results in \cref{sec:conclude}.

\section{Preliminaries}
\label{sec:prelim}

Throughout this paper, we write $\lg x$ to denote the binary logarithm $\log_2 x$, and $[x, y]$ to denote the discrete interval $\{x, x+1, \dots, y\}$. 
Our research is situated in the PRAM model of computation, where multiple processors work on a shared memory. 
In each processing cycle, each processor may read from a memory cell, write to a memory cell, or perform a simple local operation (logical shifts, basic arithmetic operations etc). 
We consider the following variations of the PRAM model: 
EREW (each memory location can be read and written by at most one processor in each time step), 
CREW (each memory location can be read by multiple processors in each time step, and written by a single processor in each time step), and 
CRCW (each memory location can be read and written by multiple processors in each time step). 
For the CRCW model, we consider two variants: In the Common-CRCW model, multiple processors are allowed to write to the same memory location in the same time step only if all of them write the same value. 
In the Arbitrary-CRCW model, multiple processors are allowed to write different values to the same memory location in the same time step, and an arbitrary processor succeeds. 
However, the designer of an algorithm for this model may not make any assumptions as to which one it is. 
The time required by a PRAM algorithm is the total number of processing cycles. 
The \emph{work} of a PRAM algorithm is defined as the total number of primitive operations that are performed by all processors, or (equivalently) as the running time of the algorithm when using only a single processor.
One of the most fundamental operations in the PRAM model is the \emph{all-prefix-operation}, and its specialization, the \emph{all-prefix-sums-operation}:

\begin{lemma}[All-Prefix-Operation, e.g.\ \cite{blelloch:prefix-sums}]\label{lemma:prefixop}
	Let $a_1, \dots, a_n$ be $n$ integers, and let $\oplus$ be a binary associative operator that can be evaluated in constant time. 
	The sequence 
	$a_1, (a_1 \oplus a_2), (a_1 \oplus a_2 \oplus a_3), \dots, (a_1 \oplus \dots \oplus a_n)$	
	can be computed in the EREW model in $\OhOf{n}$ work, $\OhOf{n}$ space and $\OhOf{\lg n}$ time.
\end{lemma}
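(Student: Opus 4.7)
The plan is to use the classical two-pass tree contraction algorithm due to Ladner and Fischer / Blelloch. First I would assume without loss of generality that $n$ is a power of two: if it is not, pad the sequence with up to $n$ additional identity elements (or conceptually skip missing positions in the traversal); this at most doubles $n$ and does not affect the asymptotic bounds. Then I would conceptually lay out a complete binary tree with $n$ leaves, where leaf $i$ stores $a_i$. Internal nodes will store partial applications of $\oplus$ over the leaves in their subtree.

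The algorithm consists of two sweeps. In the \emph{up-sweep}, I process the tree level by level from leaves to root: at each internal node $v$ with children $u_L, u_R$ whose values $x_L, x_R$ have already been computed, set the value at $v$ to $x_L \oplus x_R$. There are $n/2^j$ nodes at height $j$, so level $j$ takes $O(n/2^j)$ work and constant time in EREW (each node reads two distinct children, writes a distinct cell). Summing over $j = 1, \dots, \lg n$ gives $O(n)$ work and $O(\lg n)$ time. In the \emph{down-sweep}, I propagate an \emph{exclusive} prefix value downward: the root receives the identity; a node $v$ that receives value $p$ forwards $p$ to its left child and $p \oplus x_L$ (where $x_L$ is the up-sweep value of its left child) to its right child. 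At a leaf $i$, the received value $p_i$ is exactly $a_1 \oplus \dots \oplus a_{i-1}$, and the desired output is $p_i \oplus a_i$. Again each level uses $O(n/2^j)$ work and constant EREW-safe time, summing to $O(n)$ work and $O(\lg n)$ time overall.

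To establish correctness I would argue by induction on the height: the up-sweep invariant is that each node holds the $\oplus$-combination of the leaves in its subtree (immediate from associativity), and the down-sweep invariant is that the value delivered to any node $v$ equals the $\oplus$-combination of all leaves lying strictly to the left of $v$'s subtree. The combination at a leaf then yields the inclusive prefix. Associativity of $\oplus$ is the only algebraic property invoked; commutativity is not needed, which matters for applications where $\oplus$ is non-commutative.

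For the complexity bounds I would verify the three claims separately. The space bound is $O(n)$ because the tree has $2n - 1$ nodes and each stores a constant amount of data; intermediate arrays for the level-by-level scheduling also fit in $O(n)$. The work bound follows from the geometric sum $\sum_{j=0}^{\lg n} O(n/2^j) = O(n)$ over both sweeps. The time bound is $O(\lg n)$ because each sweep touches $\lg n + 1$ levels with constant work per level on the critical path. Finally, I would verify EREW compatibility: at every level, each processor is assigned to one distinct node, reads only its (distinct) children's or parent's cells, and writes only its own cell, so no concurrent access occurs. The only subtle point worth stating carefully is the scheduling of processors to nodes at a given level using a precomputed offset array so that no shared counter is read concurrently; this is routine but is the step most easily glossed over in an EREW setting.
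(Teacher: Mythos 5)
The paper does not prove this lemma at all; it is stated as a known result with a citation to Blelloch's survey on prefix sums, and the two-pass up-sweep/down-sweep algorithm you describe is precisely the construction found in that reference (originating with Ladner and Fischer). Your argument is correct and essentially \emph{is} the cited proof, so there is nothing to compare against on the paper's side.

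One small point you gloss over: the lemma only assumes that $\oplus$ is associative, not that it has an identity element, yet both your padding step and your down-sweep initialization (``the root receives the identity'') presuppose one. This is easily repaired --- formally adjoin a fresh identity element $e$ to the carrier set and extend $\oplus$ by $e \oplus x = x \oplus e = x$, which preserves associativity and constant-time evaluability; or avoid padding by working on a tree whose leaf count matches $n$ and handle the leftmost spine of the down-sweep specially (the leftmost leaf simply outputs $a_1$). Worth stating explicitly, since the lemma is used in the paper with operators such as modular multiplication for which ``the identity'' is available but not mentioned in the hypothesis.
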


\begin{lemma}[All-Prefix-Sums, \cite{Cole1989}]\label{lemma:prefixsum}
	The all-prefix-operation with addition as associative operator can be computed in the Common-CRCW model in $\OhOf{n}$ work, $\OhOf{n}$ space and $\OhOf{\lg n / \lg \lg n}$ time. 
\end{lemma}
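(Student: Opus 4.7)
The plan is to use a balanced tree of branching factor $b = \Theta(\lg n / \lg \lg n)$ so that the depth is $\log_b n = O(\lg n / \lg \lg n)$, and to process the tree in two sweeps. First I would arrange the $n$ inputs at the leaves. An up-sweep then assigns to every internal node the sum of the leaves in its subtree, propagating bottom-up. A down-sweep uses these subtree sums to deliver, to each leaf, the sum of all strictly preceding inputs, propagating top-down by passing each child the prefix sum of all sibling subtrees to its left plus the prefix value arriving from the parent. If every level of both sweeps runs in $O(1)$ time, the total time matches the desired bound.

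The main obstacle is the constant-time primitive at each internal node: given $b$ values from its children, one must compute either their sum (up-sweep) or all $b$ of their prefix sums (down-sweep) in $O(1)$ time on a Common-CRCW with $O(b)$ processors. A naive binary combining tree would spend $\Theta(\lg b) = \Theta(\lg \lg n)$ time per level, losing exactly the $\lg \lg n$ factor we seek. To achieve true $O(1)$ time per level, I would exploit word-level parallelism on the standard $\Theta(\lg n)$-bit machine word of the PRAM: because the operands at a given level are small (bounded by the sum of all inputs, and therefore representable in $O(\lg n)$ bits under mild word-size assumptions), several of them pack into a single word, so that a constant number of word additions simultaneously performs many integer additions. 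The Common-CRCW's ability to resolve concurrent writes of equal values, and hence to broadcast and to evaluate threshold tests in $O(1)$ time, handles the remaining combining logic.

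Work and space linearity would then follow from a standard geometric-series argument: at depth $i$ the tree has $n/b^i$ nodes, each performing $O(b)$ primitive operations per sweep, giving $O(n)$ work and $O(n)$ space across all $O(\log_b n)$ levels. The hard part, which I expect to dominate the engineering effort, is this $b$-way $O(1)$ primitive under the Common-CRCW write discipline, including ensuring that the packed representation does not overflow within a word and that carries between packed fields are handled correctly; once it is in place, the two sweeps are entirely routine and the bounds follow immediately. This is essentially the construction realized by Cole~\cite{Cole1989}.
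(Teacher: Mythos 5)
The paper does not prove this lemma; it cites it to Cole and Vishkin~\cite{Cole1989}, so there is no in-paper proof to compare against, and I can only evaluate your argument on its own merits. Your high-level skeleton is sound: a $b$-ary tree with $b = \Theta(\lg n / \lg\lg n)$ has depth $O(\lg n / \lg\lg n)$, the up-sweep/down-sweep structure for prefix sums is standard, and your work/space accounting is correct \emph{assuming} the per-node $b$-ary primitive runs in $O(1)$ time and $O(b)$ work.

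The genuine gap is exactly that $O(1)$-time per-node primitive, and the justification you give for it does not hold up. You observe that the operands at a level are "representable in $O(\lg n)$ bits" and conclude that "several of them pack into a single word." But you also (correctly) state that the PRAM word is $\Theta(\lg n)$ bits wide, so an operand that genuinely needs $\Theta(\lg n)$ bits occupies essentially the whole word and only $O(1)$ of them fit. The partial sums really do reach $\Theta(\lg n)$ bits during the up-sweep — even for $\{0,1\}$ inputs the root holds a value up to $n$ — so for most of the tree (and at every level for general integer inputs) you cannot pack $\omega(1)$ operands per word, let alone the $b = \Theta(\lg n / \lg\lg n)$ needed. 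Without that, the natural fallback is a binary combining tree at each node, costing $\Theta(\lg b) = \Theta(\lg\lg n)$ time per level, and the $\lg\lg n$ speedup — the entire content of the lemma — disappears. Closing the gap requires the actual Cole--Vishkin machinery, which is considerably more delicate than uniform word packing: the integers have to be cut into $O(\lg\lg n)$-bit slices, the slices combined by constant-time table lookup at the appropriate small scale, and the inter-slice carries isolated and reconciled. Your sketch acknowledges the carry issue but leaves both it and the packing mechanism unaddressed, which is precisely where the proof lives.
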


Next, we introduce basic string processing notations. A \emph{string} over the \emph{alphabet} $\Sigma$ is a finite sequence of \emph{symbols} from the set $\Sigma = \{1, \dots, \sigma\}$.
We write $\absolute{s}$ to denote the length of a string $s$.
The $x$-th symbol of a string is $s[x]$, while the \emph{substring} from the $x$-th to the $y$-th symbol is denoted as $s[x..y] = s[x]s[x+1]\dots s[y]$. 
The substring $s[1..y]$ is called length-$y$ prefix of $s$.

Given $k$ strings $s_1,\dots,s_k$, the length of the \emph{longest common prefix} of two strings $s_i, s_j$ is defined as $lcp(s_i,s_j) = \max \{\,\ell \mid s_i[1..\ell] = s_j[1..\ell]\,\}$. 
Let $\ell = lcp(s_i, s_j)$. 
We say that $s_i$ is \emph{lexicographically not larger} than $s_j$ and write $s_i \leqlex s_j$, iff either $\ell = \absolute{s_i}$, or $\ell < {\min(\absolute{s_i}, \absolute{s_j})}$ and ${s_i[\ell + 1] < s_j[\ell + 1]}$. 
The strings are in \emph{lexicographical order} iff we have $s_1 \leqlex s_2 \leqlex \dots \leqlex s_k$. 
The \emph{relevant prefix length} of $s_i$ is ${\ell_i = \min(\absolute{s_i}, 1+\max\{\,lcp(s_i, s_j)\mid 1 \leq j \leq k \land j \neq i\,\}})$.
The maximum number of characters that need to be inspected for a single string-to-string comparison is ${\ellmax = \max \{\,\ell_i \mid 1 \leq i \leq k\,\}}$. Finally, the \emph{distinguishing prefix size} of the strings is defined as $D = \sum_{i=1}^{k} \ell_i$, which is the minimum number of characters that need to be inspected in order to lexicographically sort the strings. 

Given $k$ strings of total length $N$ over the alphabet $[1, \sigma]$, let $f(k, N, \sigma)$ be a function indicating the resources (e.g.\ the time or space) needed by an algorithm to perform some task on the strings.
We say that $f$ is \emph{resilient in $N$}
iff multiplying $N$ by a constant factor increases $f$ by at most a constant factor, i.e.,
\begin{equation}
	\forall c_1 : \exists c_2 : \forall k, N, \sigma: f(k, c_1 \cdot N, \sigma) \leq c_2 \cdot f(k, N, \sigma)
\end{equation}

(where all variables are from $\mathbb{N}^+$). This property will be useful when determining the worst-case complexity bounds of our algorithms. Note that the equation holds in the practical case where $f$ is composed of a constant number of polynomial and polylogarithmic terms.

\FloatBarrier
\section{Approximating the Distinguishing Prefix}
\label{sec:generalidea}

In this section, we introduce our framework for $D$-aware parallel string sorting.
The general approach is to approximate the distinguishing prefix, resulting in an array $L$ of size $k$ with $L[i] \in [\ell_i, 2\ell_i)$, i.e.\ we obtain a $2$-approximation of the relevant prefix lengths. 
Afterwards, we can safely prune each string $s_i$ to its prefix $s'_i = s_i[1..L[i]]$. Clearly, the total length of the strings $s'_1, \dots s'_k$ is less than $2D$, and for any two strings we have $s_i \llex s_j \Leftrightarrow s'_i \llex s'_j$. Therefore, we can then use any (not $D$-aware) string sorting algorithm to sort the strings in time and work depending solely on $k$, $D$, and $\sigma$.

Broadly speaking, the approximation scheme performs $\ceil{\lg\ellmax} + 1$ rounds, where in round $r$ we identify and discard the strings $s_i$ with $\ell_i \in (2^{r - 1}, 2^r]$ (starting with round $r = 0$). 
More precisely, amongst all not yet discarded strings, we determine the ones whose length-$2^r$ prefix is unique.
Since any such string has not been discarded in the previous rounds, we have $\ell_i > 2^{r - 1}$, while the uniqueness of the length-$2^r$ prefix guarantees $\ell_i \leq 2^r$.
By assigning $L[i] \gets \min(\absolute{s_i}, 2^r)$, we obtain the desired $2$-approximation of $\ell_i$. The algorithm terminates as soon as all strings have been discarded (and thus all relevant prefix approximations have been found).

\vspace{.5\baselineskip}

\newcommand{\iarr}{I_{\text{arr}}}

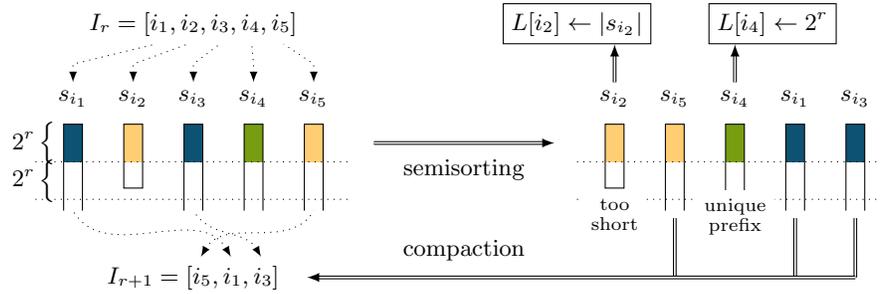
\begin{figure}[t]
	\newlength{\barheight}\setlength{\barheight}{.5cm}
	\newlength{\barwidth}\setlength{\barwidth}{.25cm}
	\newlength{\bardist}\setlength{\bardist}{.8cm}
	\newcommand{\colA}{col1}
	\newcommand{\colB}{col2}
	\newcommand{\colC}{col3}
	\begin{tikzpicture}
		\draw node[inner sep=3pt] (ir1) {$I_r = [i_1, i_2, i_3, i_4, i_5]$};
		\draw node (s1) at ($(ir1.center) - (2\bardist, 4\barwidth)$) {$s_{i_1}$};
		\draw node (s2) at ($(s1.center) + (\bardist, 0)$) {$s_{i_2}$};
		\draw node (s3) at ($(s2.center) + (\bardist, 0)$) {$s_{i_3}$};
		\draw node (s4) at ($(s3.center) + (\bardist, 0)$) {$s_{i_4}$};
		\draw node (s5) at ($(s4.center) + (\bardist, 0)$) {$s_{i_5}$};
		\draw node (s6) at ($(s5.center) + (5\bardist, 0)$) {$s_{i_2}$};
		\draw node (s7) at ($(s6.center) + (\bardist, 0)$) {$s_{i_5}$};
		\draw node (s8) at ($(s7.center) + (\bardist, 0)$) {$s_{i_4}$};
		\draw node (s9) at ($(s8.center) + (\bardist, 0)$) {$s_{i_1}$};
		\draw node (s10) at ($(s9.center) + (\bardist, 0)$) {$s_{i_3}$};
		\foreach \x in {1,...,10} {
			\draw node[below=.0\barwidth of s\x] (mr\x) {};
			\draw node (r\x) at ($(mr\x) - (.5\barwidth, 0)$) {};
			\draw node (br\x) at ($(mr\x) - (0, 2.3\barheight)$) {};
		};
		\fill[\colA] (r1) rectangle ++(\barwidth, -\barheight);
		\fill[\colB] (r2) rectangle ++(\barwidth, -\barheight);
		\fill[\colA] (r3) rectangle ++(\barwidth, -\barheight);
		\fill[\colC] (r4) rectangle ++(\barwidth, -\barheight);
		\fill[\colB] (r5) rectangle ++(\barwidth, -\barheight);
		\fill[\colB] (r6) rectangle ++(\barwidth, -\barheight);
		\fill[\colB] (r7) rectangle ++(\barwidth, -\barheight);
		\fill[\colC] (r8) rectangle ++(\barwidth, -\barheight);
		\fill[\colA] (r9) rectangle ++(\barwidth, -\barheight);
		\fill[\colA] (r10) rectangle ++(\barwidth, -\barheight);
		\foreach \x in {1,3,4,5,7,8,9,10} {
			\draw 
				($(r\x) - (0, 2.3\barheight)$) -- 
				++(0, 2.3\barheight) --
				++(\barwidth, 0) --
				++(0, -2.3\barheight);
		};
		\draw (r2) rectangle ++(\barwidth, -1.7\barheight);
		\draw (r6) rectangle ++(\barwidth, -1.7\barheight);
		\draw[dotted] ($(mr1) - (2\barwidth, \barheight)$) -- ($(mr5) - (-2\barwidth, \barheight)$);
		\draw[dotted] ($(mr1) - (2\barwidth, 2\barheight)$) -- ($(mr5) - (-2\barwidth, 2\barheight)$);
		\draw[dotted] ($(mr6) - (2\barwidth, \barheight)$) -- ($(mr10) - (-2\barwidth, \barheight)$);
		\draw[dotted] ($(mr6) - (2\barwidth, 2\barheight)$) -- ($(mr10) - (-2\barwidth, 2\barheight)$);
		\draw ($(r1) - (-1.5\barwidth, .5\barheight)$) node[left, align=center] {$2^r \begin{cases}\rule{0cm}{.45cm}\end{cases}$};
		\draw ($(r1) - (-1.5\barwidth, 1.5\barheight)$) node[left, align=center] {$2^r \begin{cases}\rule{0cm}{.45cm}\end{cases}$};
		\draw 
			($(mr5) + (\bardist, -.5\barheight)$) 
			edge[double, -latex] node[below=3pt, align=center] (sslabel) {semisorting}
			($(mr6) - (\bardist, .5\barheight)$);
			
		\draw ($(mr3) - (0, 2.3\barheight)$) node (nextI) {};
		\draw node[below=2\barwidth of nextI] (ir2) {$I_{r + 1} = [i_5, i_1, i_3]$};
		
		\draw ($(mr6) - (0pt, 1.75\barheight)$) node[fill=white, below, align=center] 
		{\scriptsize too\\[-.3\baselineskip]\scriptsize short};
		\draw ($(mr8) - (-0pt, 1.75\barheight)$) node[fill=white, below, align=center] 
		{\scriptsize unique\\[-.3\baselineskip]\scriptsize prefix};
		
		\draw ($(mr7) - (0, 2.5\barheight)$) edge[double] (mr7 |- ir2);
		\draw ($(mr9) - (0, 2.5\barheight)$) edge[double] (mr9 |- ir2);
		\draw ($(mr10) - (0, 2.5\barheight)$) edge[double] (mr10 |- ir2);
		\draw (mr10 |- ir2) edge[double, -latex] ($(ir2.east) + (\barwidth, 0)$);
		\draw (sslabel.center |- ir2.center) node[above=3pt] {compaction};
		
		\path (s1) edge[draw=none] node[midway] (mrtop) {} (ir1);		
		\draw 
			($(ir1.south) - (.5cm, 0)$) node (top1) {}
			($(ir1.south) - (.075cm, 0)$) node (top2) {}
			($(ir1.south) + (.35cm, 0)$) node (top3) {}
			($(ir1.south) + (.775cm, 0)$) node (top4) {}
			($(ir1.south) + (1.2cm, 0)$) node (top5) {};
			
		\foreach \x in {1,2,3,4,5} {
			\draw[dotted, latex-] (s\x) to (s\x |- mrtop) to (top\x.center);
		}

		\draw 
			($(ir2.north) + (.5cm, 0)$) node (bot1) {}
			($(ir2.north) + (.9cm, 0)$) node (bot3) {}
			($(ir2.north) + (.1cm, 0)$) node (bot5) {}
			(bot1) edge[draw=none] node[pos=.65] (mrbot1) {} (br1)
			(bot3) edge[draw=none] node[pos=.9] (mrbot3) {} (br3)
			(bot5) edge[draw=none] node[pos=.4] (mrbot5) {} (br5);

		\draw[dotted] (bot1.center) edge[in=300, out=120, latex-] (br1.center);
		\draw[dotted] (bot3.center) edge[in=300, out=120, latex-] (br3.center);
		\draw[dotted] (bot5.center) edge[in=240, out=60, latex-] (br5.center);

		\draw node[draw=black, xshift=-.5cm] (l2) at (s6 |- ir1) {${L[i_2] \gets \absolute{s_{i_2}}}$};
		\draw node[draw=black, xshift=.5cm] (l4)  at (s8 |- ir1) {${L[i_4] \gets 2^r}$};
		
		\draw (s6) edge[double, -latex] (s6 |- l2.south);
		\draw (s8) edge[double, -latex] (s8 |- l4.south);
	\end{tikzpicture}
	\caption{Round $r$ of our approximation scheme. Equal colors identify equal prefixes (best viewed in color).}
	\label{fig:scheme:round}
\end{figure}

Let us look at a single round in technical detail. 
Let $I_r$ be the set of strings (or more precisely their indices) that survived until round $r$, and whose length is at least $2^r$, i.e.\ $I_r = \{i \in [1, k] \mid \ell_i > 2^{r - 1} \land \absolute{s_i} \geq 2^r\}$. 
Initially, before round $r = 0$, we have $I_0 = \{1, \dots, k\}$. 
From now on, let $k_r = \absolute{I_r}$ denote the number of strings that survived until round $r$.
Before starting the round, we assume that $I_r$ is given as a compact array of $k_r$ words. 
Each round consists of two phases, which we explain in the following. 
The description is supported by \cref{fig:scheme:round}.

\begin{description}
	\item[Semisorting Phase.] We semisort $I_r$ using the length-$2^r$ prefixes of the corresponding strings as keys (i.e.\ entry $I_r[j] = i$ is represented by the key $s_i[1..2^r]$). 
	Semisorting is a relaxation of sorting that reorders the entries such that equal keys are contiguous, but different keys do not necessarily have to appear in correct order. 
	In the upcoming sections, we propose different approaches for this phase.

	\item[Compaction Phase.] Let $I_r$ be semisorted as described above, and let $i = I_r[j]$ be any entry. 
	Furthermore, let $i^{-} = I_r[j - 1]$ and $i^{+} = I_r[j + 1]$ be the neighboring entries of $I_r[j]$. 
	Due to the semisorting, the length-$2^r$ prefix of $s_i$ is unique iff $s_{i^{-}}[1..2^r] \neq s_i[1..2^r] \neq s_{i^{+}}[1..2^r]$. 
	We trivially check this condition for all entries simultaneously in $\OhOf{k_r \cdot 2^r}$ work
	and $\OhOf{1}$ time in the Common-CRCW model, or in the same work 
	and $\OhOf{\lg 2^r} = \OhOf{r}$ time in the EREW model (which can be easily achieved using \cref{lemma:prefixop}).
	If the prefix of $s_i$ is unique, we assign $L[i] \gets 2^r$ and $I_r[j] \gets 0$ (where $I_r[j] = 0$ indicates that we no longer need to consider $s_i$ in upcoming rounds).
	Otherwise, we check if $s_i$ is too short to be considered in the next round:
	If $\absolute{s_i} \leq 2^{r + 1}$ holds, we assign $L[i] \gets \absolute{s_i}$ and $I_r[j] \gets 0$.
	Finally, we obtain $I_{r + 1}$ by moving the non-zero entries of $I_r$ to the front of the array. 
	This requires a single all-prefix-sums-operation \cite[Section 3.1]{Vishkin2010}, and thus $\OhOf{k_r}$ work
	and $\OhOf{\lg k_r}$ time in the EREW model, or the same work
	and $\OhOf{\lg k_r / \lg\lg k_r}$ time in the Common-CRCW model (\cref{lemma:prefixop,lemma:prefixsum}).
\end{description}

\noindent\textbf{\textsf{Complexity.}}
Before discussing different approaches for the semisorting phase, we already give general bounds for the work and time complexity of our approximation scheme.
For this purpose we only consider the compaction phase, which takes $\OhOf{k_r \cdot 2^r}$ work in round $r$ (regardless of the PRAM model)
and thus $\OhOf{\sum_{r = 0}^{\infty
} k_r \cdot 2^r}$ work in total. This is asymptotically optimal:
%
\begin{equation}\label{eqn:optimalwork}
	\newcommand{\tmpspace}{\ }
	\sum_{r = 0}^{\mathclap{\ceil{\lg \ellmax}}} k_r \cdot 2^r 
	\tmpspace=\tmpspace
	\sum_{r = 0}^{\mathclap{\ceil{\lg \ellmax}}} \sum_{\ i \in I_r} 2^r 
	\tmpspace\leq\tmpspace
	\sum_{i = 1}^k \sum_{r = 0}^{\ceil{\lg \ell_i}} 2^r 
	\tmpspace<\tmpspace
    \sum_{i = 1}^{k} 2^{\ceil{\lg \ell_i} + 1}
    \tmpspace\leq\tmpspace
    \sum_{i = 1}^{k} 4\ell_i 
	\tmpspace=\tmpspace
	4D
\end{equation}
%

Next, we focus on the execution time in the EREW model. 
The compaction phase of round $r$ takes $\OhOf{r + \lg k_r} \subseteq \OhOf{\lg \ellmax + \lg k}$ time, resulting in $\OhOf{\lg \ellmax \cdot (\lg \ellmax + \lg k)}$ time for all rounds. 
In the stronger Common-CRCW model, we have ${\OhOf{\lg k_r / \lg\lg k_r} \subseteq \OhOf{\lg k / \lg \lg k}}$ time for round $r$, and thus $\OhOf{\lg d \cdot \lg k / \lg \lg k}$ time in total.


\FloatBarrier

\section{Deriving Deterministic $D$-aware String Sorters}

\label{sec:derive:determ}

The perhaps easiest solution for the semisorting phase is to use an existing string sorter as a subroutine, e.g.\ one of the algorithms that we discussed in \cref{sec:relwork}. 
Then, after finishing the last round of our approximation scheme, we reduce the strings to their length-$L[i]$ prefixes and sort them with the same algorithm that we already used during the semisorting phase.
This naturally results in a new $D$-aware string sorter, as visualized in \cref{fig:derive}.

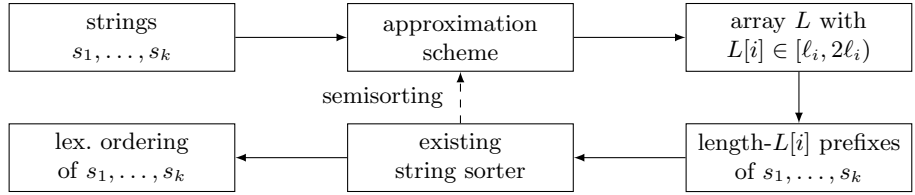
\begin{figure}[t]
	\small
	\begin{tikzpicture}
		\tikzset{mn/.style={draw, align=center, minimum width=3cm, minimum height=.9cm}}
		\draw 
			node[mn] (a) {strings\\$s_1, \dots, s_k$}
			(a.center) ++(4.5cm, 0) node[mn] (b) {approximation\\scheme}
			(b.center) ++(4.5cm, 0) node[mn] (c) {array $L$ with\\$L[i] \in [\ell_i, 2\ell_i)$}
			(c.center) ++(0, -1.6cm) node[mn] (d) {length-$L[i]$ prefixes\\of $s_1, \dots, s_k$}
			node[mn] (e) at (b |- d) {existing\\string sorter}
			node[mn] (f) at (a |- d) {lex.\ ordering\\of $s_1, \dots, s_k$}
			(a) edge[-latex] (b) 
			(b) edge[-latex] node[pos=.6] (x) {} (c) 
			(c) edge[-latex] (d) 
			(d) edge[-latex] (e) 
			(e) edge[-latex] (f);
		\draw (e) edge[-latex, dashed] node[left] (y) {semisorting\ \ } (b);
	\end{tikzpicture}
	\caption{Deriving $D$-aware string sorters from existing $D$-unaware solutions.}
	\label{fig:derive}
\end{figure}

We obtain a general result for an important class of sorters: The ones that do not rely on comparison sorting and typically require $N \cdot w(k, N, \sigma)$ work and $t(k, N, \sigma)$ time for some functions $w$ and $t$ that are resilient in $N$ and non-decreasing in $k$ and $N$ (e.g.\ Hagerup's algorithm \cite{DBLP:conf/stoc/Hagerup94}).
Using such an algorithm, the semisorting phase of round $r$ takes $(k_r \cdot 2^r) \cdot w(k_r, k_r \cdot 2^r, \sigma)$ work.
Summing up all rounds, the total work for semisorting is $\OhOf{D \cdot w(k, D, \sigma)}$:
\begin{equation}
	\newcommand{\tmpspace}{\ \ }
	\sum_{r = 0}^{\mathclap{\ceil{\lg \ellmax}}} (k_r \cdot 2^r) \cdot w(k_r, k_r \cdot 2^r, \sigma)
	\tmpspace\leq\tmpspace
	\sum_{r = 0}^{\mathclap{\ceil{\lg \ellmax}}} (k_r \cdot 2^r) \cdot w(k, 2D, \sigma)
	\tmpspace<\tmpspace
	4D \cdot w(k, 2D, \sigma)
\end{equation}

The first inequality holds because $w$ is non-decreasing in $k$ and $N$, while the second one holds due to \cref{eqn:optimalwork}. 
We have $w(k, 2D, \sigma) = \OhOf{w(k, D, \sigma)}$ because $w$ is resilient in $N$. 
For the same reason, the time for the semisorting phase of round $r$ is $t(k_r, k_r \cdot 2^r, \sigma) \leq t(k, 2D, \sigma) = \OhOf{t(k, D, \sigma)}$. 
Combined with the bounds from \cref{sec:generalidea} we have:
\begin{theorem}\label{lemma:sortersubroutine}
	Let $N \cdot w(k, N, \sigma)$ and $t(k, N, \sigma)$ be the work and time needed by some algorithm to sort $k$ strings of total length $N$ over the alphabet $[1, \sigma]$ (for arbitrarily large $\sigma$), with $w$ and $t$ resilient in $N$ and non-decreasing in $k$ and $N$.
    Let $D$ be the distinguishing prefix size. 
    Then we can sort the strings in $\OhOf{D \cdot w(k, D, \sigma)}$ work and
	$\OhOf{\lg d \cdot (\lg d + \lg k + t(k, D, \sigma))}$ time. The PRAM model matches the one of the string sorter.  If the model is at least as strong as the Common-CRCW model, the time decreases to $\OhOf{{\lg d \cdot (\lg k / \lg \lg k + t(k, D, \sigma))}}$.
\end{theorem}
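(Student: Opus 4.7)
The plan is to instantiate the generic pipeline of \cref{fig:derive}: use the given string sorter as the semisorting subroutine inside the approximation scheme of \cref{sec:generalidea}, and then apply the same sorter a final time to the pruned strings $s'_i = s_i[1..L[i]]$. Correctness is immediate from the framework: since $L[i] \in [\ell_i, 2\ell_i)$, we have $s_i \llex s_j \Leftrightarrow s'_i \llex s'_j$, so sorting the pruned strings lexicographically sorts the originals.

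For the work bound, I would separately account for the three contributions and then add them. The compaction phases sum to $\OhOf{D}$ work by \cref{eqn:optimalwork}. The semisorting phases sum to strictly less than $4D \cdot w(k, 2D, \sigma)$ work by the displayed calculation already given in the excerpt, and resilience of $w$ in $N$ collapses this to $\OhOf{D \cdot w(k, D, \sigma)}$. The final sorting step acts on strings of total length $N' < 2D$, so it costs $N' \cdot w(k, N', \sigma) = \OhOf{D \cdot w(k, D, \sigma)}$, again by resilience. Summing gives the claimed work bound, where the $\OhOf{D}$ compaction cost is absorbed since $w \geq 1$.

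For the time bound, I would argue per round and then multiply by the $\ceil{\lg d} + 1$ rounds. In each round $r$, the semisorting step costs $t(k_r, k_r \cdot 2^r, \sigma) \leq t(k, 2D, \sigma) = \OhOf{t(k, D, \sigma)}$ by monotonicity and resilience, while the compaction step costs $\OhOf{r + \lg k_r} \subseteq \OhOf{\lg d + \lg k}$ in the EREW model, or $\OhOf{\lg k / \lg \lg k}$ in the Common-CRCW model (via \cref{lemma:prefixop,lemma:prefixsum}). Across all rounds this yields $\OhOf{\lg d \cdot (\lg d + \lg k + t(k, D, \sigma))}$ time, respectively $\OhOf{\lg d \cdot (\lg k / \lg \lg k + t(k, D, \sigma))}$. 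The final sort contributes one extra $t(k, D, \sigma)$ term, which is absorbed into the factor $\lg d \cdot t(k, D, \sigma)$.

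There is no real obstacle here beyond bookkeeping: the framework and \cref{eqn:optimalwork} have already done the heavy lifting. The one subtlety worth stating cleanly is that monotonicity of $w$ and $t$ in both arguments is what licenses replacing $k_r$ by $k$ and $k_r \cdot 2^r$ by $2D$ per round, and resilience of $w, t$ in $N$ is what lets us replace $2D$ by $D$ in the asymptotic statement; since these properties are part of the hypothesis, the argument is routine.
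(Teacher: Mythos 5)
Your proposal is correct and follows the same route the paper takes: use the given sorter as the semisorting subroutine, sum the per-round work via the displayed calculation and \cref{eqn:optimalwork}, bound the per-round time by monotonicity and resilience, combine with the compaction bounds from \cref{sec:generalidea}, and finish with one final sort of the pruned strings whose total length is less than $2D$. The only addition you make beyond the paper's (rather terse) account is the explicit remark that the $\OhOf{D}$ compaction work is absorbed because $w \geq 1$, which is a correct and harmless clarification.
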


Note that the theorem requires a string sorter that allows \emph{arbitrary} alphabets. This is due to the fact that (even after the first round) the number $k_r$ of remaining strings can become arbitrarily small. Consequently, the alphabet size might become arbitrarily large compared to the total length $k_r \cdot 2^r$ of the strings that we have to semisort in round $r$.


\subsubsection{Dealing With Large Alphabets.}

In theory, \cref{lemma:sortersubroutine} directly implies new $D$-aware string sorters. 
However, while the theorem applies to sorters for arbitrary alphabets, many of the existing string sorting algorithms are restricted to polynomial alphabets (i.e.\ $\sigma = N^{\OhOf{1}}$).
In the remainder of this section, we show that even such alphabet restricted sorters work with \cref{lemma:sortersubroutine}, if we equip them with an additional preprocessing routine.
We demonstrate the technique using Hagerup's algorithm \cite{DBLP:conf/stoc/Hagerup94} as an example.
It will be easy to see that it would just as well work with any other string sorter. 
Recall Hagerup's original result:

\begin{lemma}[Hagerup \cite{DBLP:conf/stoc/Hagerup94}, Theorems 4.4 and 4.5]
	A set of strings of total length $N$ over the alphabet $[1, N^{\OhOf{1}}]$ can be sorted in $\OhOf{\lg N / \lg \lg N}$ time and $\OhOf{N\lg \lg N}$ work in the CRCW model, or in $\OhOf{N \sqrt{\lg N}}$ work and $\OhOf{\lg^{3/2} N\sqrt{\lg\lg N}}$ time in the CREW model, or in $\OhOf{N \sqrt{\lg N\lg\lg N}}$ work and $\OhOf{\lg^{3/2} N\sqrt{\lg\lg N}}$ time in the EREW model.
	
	\vspace{.5\baselineskip} 
	
	\noindent\textit{Remark:} \textnormal{Hagerup does not explicitly state which variant of the CRCW model is used. 
	However, the algorithm relies on a padded integer sorting subroutine that requires the Arbitrary-CRCW model \cite{Hagerup1993}. It appears that all other operations performed by the algorithm require at most the Arbitrary-CRCW model as well.}
	\label{lemma:hageruppoly}
\end{lemma}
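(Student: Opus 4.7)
The plan is to reduce string sorting to integer sorting through iterative symbol-pair compression, following the framework of Vaidyanathan et al.~\cite{Vaidyanathan1991}. First I would place all symbols of all input strings into a single working array, each tagged with its owner string index and its position within that string, so that the scheme operates uniformly on $\OhOf{N}$ items rather than on each string separately. In each round I would pair up adjacent symbols of every string to form ``super-symbols'' drawn from $[1, \sigma^2]$, then run a PRAM integer sort to rank the distinct super-symbols that actually occur and rename each occurrence by its rank in $[1, N]$. This halves every string's length while keeping the new alphabet polynomial in $N$. After $\OhOf{\lg N}$ such rounds every string has collapsed to a single integer whose integer order coincides with the lexicographic order of the original, and a final integer sort of these integers produces the output permutation.

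For the work bound, the total number of symbols halves from round to round, so the cost telescopes to $\OhOf{N \cdot c(N, \sigma)}$, where $c(N, \sigma)$ is the per-symbol cost of the underlying integer sort. The three families of bounds in the statement then follow by substituting the best known PRAM integer sorters in the respective models: the padded integer sorter of Hagerup~\cite{Hagerup1993} yields the Arbitrary-CRCW bound of $\OhOf{N \lg \lg N}$ work and $\OhOf{\lg N / \lg \lg N}$ time, whereas Cole-style deterministic mergesorters (and their integer-sort adaptations) yield the $\sqrt{\lg N}$-type bounds for CREW and EREW.

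The main obstacle is the time complexity: naively multiplying the per-round sorting time by the $\Theta(\lg N)$ number of rounds would overshoot the target by a large polylogarithmic factor. The resolution---the technical core of Hagerup's paper---is to pipeline the rounds so that later rounds begin processing completed portions of the data while earlier rounds still work on other portions, making the total time governed by the critical path through the merging tree rather than by the sum of per-round times. Arranging this pipeline while simultaneously guaranteeing that every renamed alphabet stays bounded by $N^{\OhOf{1}}$---so that each integer sort actually achieves its claimed complexity---is the delicate part of the argument, and this is precisely where Hagerup's construction has to be invoked as a black box rather than reproved from scratch.
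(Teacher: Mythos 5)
The paper does not prove this lemma: it is a verbatim citation of Hagerup's Theorems~4.4 and~4.5, and the only original content the paper adds is the remark pinning down that the CRCW variant is \emph{Arbitrary}-CRCW (based on the padded integer sort of~\cite{Hagerup1993}). Your sketch is therefore not being compared against a proof in the paper, but against the paper's very brief characterization of Hagerup's algorithm in Section~1.1, namely that it ``is based on an algorithm by Vaidyanathan et al.\ \cite{Vaidyanathan1991} that reduces each string to a single integer by repeatedly merging adjacent symbols.'' On that score your description is consistent: the pair-and-rename loop, the order-preserving renaming to $[1,N]$ to keep the alphabet polynomial, the telescoping work, and the observation that the time bound requires pipelining rather than naively summing $\Theta(\lg N)$ rounds are all the right high-level ingredients, and you correctly flag that the pipelining is exactly the part that must be taken from Hagerup as a black box.

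Two inaccuracies are worth flagging if this were to stand as an expository proof sketch. First, ``Cole-style deterministic mergesorters (and their integer-sort adaptations)'' cannot yield the $\OhOf{N\sqrt{\lg N}}$-work, $\OhOf{\lg^{3/2}N\sqrt{\lg\lg N}}$-time CREW/EREW bounds: Cole's mergesort is a comparison sort with $\Theta(N\lg N)$ work, and the $\sqrt{\lg N}$-type bounds require genuinely different integer-sorting machinery (the paper itself later uses Han--Shen~\cite{Han2002} in a similar role, and Hagerup's 1994 paper relies on integer sorters available at the time). Second, the assertion that after collapsing each string to one integer ``whose integer order coincides with the lexicographic order of the original'' glosses over the treatment of strings of unequal length, which is precisely why Vaidyanathan-style compression touches all $N$ symbols and does not naturally truncate at the distinguishing prefix (a point the paper itself makes when explaining why Hagerup's algorithm is not easily made $D$-aware). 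Neither issue affects the truth of the cited lemma, but both would need repair before the sketch could be read as a self-contained argument.
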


In order to generalize \cref{lemma:hageruppoly} to arbitrary alphabets $[1, \sigma]$ with $\sigma \notin N^{\OhOf{1}}$, we perform a preprocessing that reduces the alphabet to $[1, N]$ in an order preserving manner.
The general idea is to use an integer sorter to sort the symbols that actually occur in any of the strings.
Then, we can simply replace each symbol with its rank amongst the sorted symbols.
A similar reduction technique has previously been used by Hagerup \cite[p. 389]{DBLP:conf/stoc/Hagerup94} (but for a different purpose). 
For now, we only consider the Arbitrary-CRCW model.

First, we create $N$ tuples of the form $\tuple{i, j, c}$, where $c$ is the $j$-th symbol of $s_i$.
{%
\let\oldcdot\cdot%
\renewcommand{\cdot}{{\,\oldcdot\,}}%
Initially, the tuples are ordered by their first and second component, i.e.\ $\tuple{1, 1, \cdot} \dots \tuple{1, \absolute{s_1}, \cdot} \tuple{2, 1, \cdot} \dots \tuple{2, \absolute{s_2}, \cdot} \dots \tuple{k, 1, \cdot} \dots \tuple{k, \absolute{s_k}, \cdot}$.
}
In order to store this sequence in a consecutive memory area, we have to determine the position of each tuple within the sequence.
Using the all-prefix-sums-operation, we can trivially realize this step in $\OhOf{N}$ work and $\OhOf{\lg N / \lg \lg N}$ time due to \cref{lemma:prefixsum}.
Then, we use the integer sorting algorithm by Bhatt et al.\cite{Bhatt1991} to sort the tuples by their third component, which takes $\OhOf{N \lg \lg \sigma}$ work and $\OhOf{\lg N / \lg \lg N + \lg\lg\sigma}$ time.
Let $\tuple{i_1, j_1, c_1} \dots \tuple{i_N, j_N, c_N}$ be the sorted sequence of tuples.
In an array $A \in \{0, 1\}^N$, we mark the (in terms of the sequence) leftmost occurrence of each character, i.e.\ $\forall h \in [2, N] : c_{h - 1} \neq c_h \Leftrightarrow A[h] = 1$.
Next, we replace $A$ with its prefix-sums, once again taking $\OhOf{N}$ work and $\OhOf{\lg N / \lg \lg N}$ time due to \cref{lemma:prefixsum}.
Now each entry $A[h]$ contains exactly the rank of the symbol $c_h$ amongst all symbols.
Finally, for each $h \in [1, N]$, we replace the $j_h$-th symbol of the $i_h$-th string with $A[h] + 1$. 
Since this reduces the alphabet to (a subset of) $[1, N]$ in an order preserving manner, we can sort the strings using \cref{lemma:hageruppoly}. 

In the weaker CREW and EREW models we use the same technique, but replace the algorithm by Bhatt et al.\ with Han and Shen's integer sorter in the EREW model \cite[Theorem 4.1]{Han2002}, which sorts the $N$ tuples in $\OhOf{N\sqrt{\lg N}}$ work and $\OhOf{\lg^{3/2}N}$ time. We have shown:

\begin{corollary}
	A set of strings of total length $N$ over the alphabet $[1, \sigma]$ can be sorted in $\OhOf{\lg N / \lg \lg N + \lg\lg\sigma}$ time and $\OhOf{N\lg \lg N + N \lg \lg \sigma}$ work in the Arbitrary-CRCW model, 
	or in $\OhOf{N \sqrt{\lg N}}$ work and $\OhOf{\lg^{3/2} N\sqrt{\lg\lg N}}$ time in the CREW model, or in $\OhOf{N \sqrt{\lg N\lg\lg N}}$ work and $\OhOf{\lg^{3/2} N\sqrt{\lg\lg N}}$ time in the EREW model.	
	\label{lemma:hagerupgeneral}
\end{corollary}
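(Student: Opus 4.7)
My plan is to prove the corollary by extending Hagerup's alphabet-restricted sorter from Lemma~\ref{lemma:hageruppoly} with an order-preserving alphabet-reduction preprocessing, so that the input over $[1,\sigma]$ is first rewritten into an equivalent instance over $[1,N]$ (which is trivially polynomial in the new total length). The idea is that relabeling each distinct symbol by its rank among the symbols that actually occur preserves lexicographic order, and rank-assignment can be done by a single integer sort followed by prefix sums.

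First, I would materialize the entire input as a sequence of $N$ tuples $\tuple{i,j,c}$, where $c = s_i[j]$, laid out in the canonical row-major order. Computing the starting offset of each string within this flat array is a standard all-prefix-sums over $|s_1|,\dots,|s_k|$, costing $\OhOf{N}$ work and $\OhOf{\lg N/\lg\lg N}$ time in Common-CRCW (\cref{lemma:prefixsum}) or $\OhOf{\lg N}$ in EREW (\cref{lemma:prefixop}). Next, I would stably sort the tuples by their third component, using the integer sorter of Bhatt et al.~in the Arbitrary-CRCW case and Han and Shen's EREW sorter otherwise; both yield exactly the work/time bounds claimed in the three rows of the corollary statement. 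In the resulting order $\tuple{i_1,j_1,c_1}\dots\tuple{i_N,j_N,c_N}$, equal symbols sit contiguously, so I would build a $0/1$ array $A$ with $A[h] = 1$ iff $c_h \neq c_{h-1}$ (setting $A[1]=0$), and replace $A$ with its prefix sums. After this, $A[h]+1$ equals the rank of $c_h$ among the distinct symbols present, hence a value in $[1,N]$.

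Having the tuples together with their ranks, I would write $A[h]+1$ back into position $j_h$ of string $i_h$; this is a single concurrent write step (each target cell is touched by exactly one processor, so it is safe even under EREW). The resulting strings have the same total length $N$ over an alphabet $\subseteq [1,N]$ and the same lexicographic order as the originals, so I can finish by invoking \cref{lemma:hageruppoly}.

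Finally, I would add the three contributions per model. The preprocessing contributes $\OhOf{N\lg\lg\sigma}$ work and $\OhOf{\lg N/\lg\lg N + \lg\lg\sigma}$ time in Arbitrary-CRCW, and $\OhOf{N\sqrt{\lg N}}$ work with $\OhOf{\lg^{3/2} N}$ time in CREW/EREW (the CREW case uses Han and Shen's algorithm in its EREW implementation, which runs on CREW at the same cost). Since \cref{lemma:hageruppoly} dominates or matches each of these in the CREW and EREW columns, the totals collapse to the bounds stated in the corollary; in Arbitrary-CRCW the preprocessing only adds the extra $\lg\lg\sigma$ time term and $N\lg\lg\sigma$ work term that appear in the statement. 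The only delicate point is checking that the integer-sorter steps really only need Arbitrary-CRCW (and not something stronger), but this is already the regime in which \cref{lemma:hageruppoly} itself lives, so no additional model assumption is introduced.
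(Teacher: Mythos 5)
Your proposal matches the paper's proof essentially step for step: create the $N$ tuples $\tuple{i,j,c}$ and lay them out with a prefix-sums pass, sort by the third component using Bhatt et al.\ in the Arbitrary-CRCW model and Han--Shen's EREW sorter in the CREW/EREW cases, derive ranks by marking run boundaries and taking prefix sums, write the ranks back in place, and then invoke \cref{lemma:hageruppoly} on the relabeled instance. The only differences are cosmetic (you set $A[1]=0$ explicitly, note that stability is immaterial, and flag the write-back's EREW-safety), so the argument is correct and follows the same route as the paper.
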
 


\begin{theorem}\label{lemma:hagerupfinal}
	A set of $k$ strings over the alphabet $[1, \sigma]$ with distinguishing prefix size $D$ and longest relevant prefix of length $d$ can be sorted in the work and time stated in \cref{tbl:results}(a).
\end{theorem}

The theorem follows from \cref{lemma:hagerupgeneral} and \cref{lemma:sortersubroutine}.
Note that the work and time in the Arbitrary-CRCW model are $\OhOf{D \lg \lg D}$ and $\OhOf{\lg d \cdot \lg D / \lg \lg D}$, respectively, if the alphabet is quasipolynomial in the distinguishing prefix size, i.e.\ $\sigma = D^{(\lg^{\OhOf{1}} D)}$.

\subsection{Deriving Comparison-Based Sorters}
\label{sec:derive:comp}

As mentioned earlier, any comparison-based string sorter requires $\Omega(k \lg k + D)$ work. 
In this section, we take the $\OhOf{k \lg k + N}$ work algorithm by J\'aJ\'a et al.\ \cite{DBLP:journals/tcs/JaJaRV96}, and derive an $\OhOf{k \lg k + D}$ work modification, thus matching the lower bound.
Assuming that we use the $\OhOf{k \lg k + N}$ work algorithm to realize the semisorting phase of our approximation scheme, the work for semisorting in round $r$ becomes $\OhOf{k_r \lg k_r + k_r \cdot 2^r}$. After the $\ceil{\lg \lg k}$-th round, the $k_r \cdot 2^r$ term 
dominates the $k_r \lg k_r$ term. Therefore, the total work for semisorting is:
\begin{equation}
	\OhOfLarge{\sum_{r = 0}^{\ceil{\lg \ellmax}} k_r \lg k_r + k_r \cdot 2^r} 
	\enskip=\enskip
	\OhOfLarge{\sum_{r = 0}^{\ceil{\lg \lg k}-1} k_r \lg k_r + \sum_{r = 0}^{\ceil{\lg \ellmax}} k_r \cdot 2^r}
\end{equation}

Following \cref{eqn:optimalwork}, the second sum on the right-hand side of the equation is bounded by $\OhOf{D}$. Unfortunately, there appears to be no such upper bound for the first sum.
Therefore, we relax our approximation scheme by simply skipping the initial $\ceil{\lg \lg k}$ rounds. 
This way, the first round that we actually perform is round $r = \ceil{\lg \lg k}$, during which we consider prefixes of length $2^{\ceil{\lg \lg k}} < 2\lg k$. 
Note that consequently we may overestimate the length of relevant prefixes by $2\lg k$ additional symbols, i.e.\ we obtain $\Lrelax[i] \in [\ell_i, 2 \cdot \max(\lg k, \ell_i))$. Thus, when truncating each string to its prefix $s_i[1..\Lrelax[i]]$, the total length of the strings is
\begin{equation}
	D' := \sum_{i = 1}^k \Lrelax[i] < 2\sum_{i = 1}^k (\lg k + \ell_i) = 2 k \lg k + 2D.
\end{equation}

Therefore, after computing $\Lrelax$, we can use the algorithm by J\'aJ\'a et al.\ once more to sort the truncated strings in optimal $\OhOf{k \lg k + D'} \subseteq \OhOf{k \lg k + D}$ work. 
The semisorting in round $r$ takes $\OhOf{\lg^2 k_r / \lg \lg k_r} \subseteq \OhOf{\lg^2 k / \lg \lg k}$ time, and there are $\ceil{\lg d} - \ceil{\lg \lg k} = \OhOf{\lg d}$ rounds. 
Together with the bounds from \cref{sec:generalidea} we have:

\begin{theorem}\label{lemma:jajaplain}
	A set of $k$ strings with distinguishing prefix size $D$ and longest relevant prefix of length $d$ can be sorted in the Common-CRCW model in $\OhOf{k \lg k + D}$ work and $\OhOf{\lg d \cdot \lg^2 k / \lg \lg k}$ time.
\end{theorem}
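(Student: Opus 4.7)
The plan is to instantiate the framework of \cref{sec:generalidea} by using the J\'aJ\'a et al.\ $\OhOf{k\lg k + N}$ work sorter as the semisorting subroutine, and then applying it once more to the truncated strings after the approximation completes. In round $r$ the semisorter is invoked on $k_r$ length-$2^r$ prefixes, contributing $\OhOf{k_r \lg k_r + k_r \cdot 2^r}$ work and $\OhOf{\lg^2 k_r / \lg \lg k_r} \subseteq \OhOf{\lg^2 k / \lg\lg k}$ time. Summed over all rounds, the second additive term is $\OhOf{D}$ by \cref{eqn:optimalwork}, which is exactly the $D$-dependence we want.

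The main obstacle is that the first additive term $\sum_r k_r \lg k_r$ has no matching telescoping bound: in the earliest rounds $k_r$ is still close to $k$, and a naive sum yields $\Theta(k \lg k \cdot \lg d)$, which violates the target work. My remedy is to simply \emph{skip} the initial $\lceil \lg \lg k \rceil$ rounds of the scheme, so that the first round actually executed is $r = \lceil \lg \lg k \rceil$. Every surviving round then satisfies $2^r \geq \lg k \geq \lg k_r$, so $k_r \lg k_r \leq k_r \cdot 2^r$, and the offending sum is dominated by the geometric one. The total semisorting work thus collapses to $\OhOf{D}$, while the time remains $\OhOf{\lg d \cdot \lg^2 k / \lg\lg k}$ across the $\OhOf{\lg d}$ surviving rounds.

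Skipping the initial rounds costs some precision: each string now receives a value $\Lrelax[i] \in [\ell_i, 2\max(\lg k, \ell_i))$, so the length-$\Lrelax[i]$ prefixes have combined length strictly less than $2k\lg k + 2D$. Lexicographic correctness is preserved because $\Lrelax[i] \geq \ell_i$, so for the truncated strings $s'_i = s_i[1..\Lrelax[i]]$ we still have $s_i \llex s_j \Leftrightarrow s'_i \llex s'_j$. A concluding invocation of J\'aJ\'a's sorter on the truncated strings therefore finishes the job in $\OhOf{k \lg k + (k\lg k + D)} = \OhOf{k \lg k + D}$ work and $\OhOf{\lg^2 k / \lg\lg k}$ time. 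Combined with the $\OhOf{D}$ work and $\OhOf{\lg d \cdot \lg k / \lg\lg k}$ time contributed by the compaction phases (from \cref{sec:generalidea} in the Common-CRCW model), the overall complexity is $\OhOf{k \lg k + D}$ work and $\OhOf{\lg d \cdot \lg^2 k / \lg\lg k}$ time, matching the statement of the theorem.
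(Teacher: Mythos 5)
Your proposal follows the paper's proof essentially step for step: use the J\'aJ\'a et al.\ sorter for the semisorting phase, skip the first $\ceil{\lg\lg k}$ rounds so that $k_r\lg k_r \le k_r\cdot 2^r$ holds in every executed round, observe that the relaxed approximation $\Lrelax[i] \in [\ell_i, 2\max(\lg k, \ell_i))$ still preserves the lexicographic order and yields total truncated length $< 2k\lg k + 2D$, and finish with one more call to the J\'aJ\'a sorter. The structure, the key inequality, and the final complexity accounting all match.

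There is one small slip worth flagging: after the initial rounds are skipped, the total semisorting (and compaction) work does \emph{not} collapse to $\OhOf{D}$. In the first round actually executed, $r_0 = \ceil{\lg\lg k}$, potentially all $k$ strings are still active while $2^{r_0} = \Theta(\lg k)$, so that single round already costs $\Theta(k\lg k)$ work; and $k\lg k$ can strictly dominate $D$ (e.g.\ if every $\ell_i = 1$, then $D = k < k\lg k$). The bound from \cref{eqn:optimalwork} was derived for the scheme starting at $r = 0$, where each string $s_i$ contributes only $\sum_{r\le\ceil{\lg\ell_i}} 2^r = \OhOf{\ell_i}$; once rounds are skipped, every string is charged at least $2^{r_0}$ in round $r_0$, giving a per-string contribution of $\OhOf{\lg k + \ell_i}$ and hence $\OhOf{k\lg k + D}$ overall, not $\OhOf{D}$. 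Fortunately this does not affect your conclusion, since the theorem only claims $\OhOf{k\lg k + D}$ work and the final J\'aJ\'a invocation on the truncated strings already incurs that cost; but the intermediate $\OhOf{D}$ claim for the semisorting and compaction phases should be corrected to $\OhOf{k\lg k + D}$.
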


Note that we cannot trivially use our approximation scheme to derive a $D$-aware modification of the randomized string sorter by J\'aJ\'a et al. \cite{DBLP:journals/tcs/JaJaRV96}, which sorts $k$ strings of total length $N$ in $\OhOf{k \lg k + N}$ work and $\OhOf{\lg k}$ time with high probability, i.e.\ with probability $1 - (1/k)^{c}$ for any constant $c > 0$. 
If we were using this algorithm for the semisorting phase, then the probability of successfully sorting the remaining strings in round $r$ would be $1 - (1/{k_r})^c$. 
However, even after the first round, $k_r$ can become arbitrarily small, resulting in a low probability of success. 
The randomized string semisorters from the next section will allow us to circumvent this problem.

\section{Randomized String Semisorting}
\label{sec:fingerprints}

In this section, we equip our approximation scheme with randomized string semisorters that are based on Karp-Rabin fingerprints~\cite{DBLP:journals/ibmrd/KarpR87}. 
The goal of these fingerprints is to hash substrings to small integers, which allows fast equality testing. 
Consider the semisorting phase of round~$r$, during which we have to semisort $k_r$ string prefixes of length $2^r$ each. 
Instead of directly semisorting the prefixes, we first compute a fingerprint as a representative for each prefix, and then semisort the fingerprints. 
This way, we can use less complex integer sorting algorithms as a subroutine.
Before going into detail, we show how to efficiently compute fingerprints in the EREW model.

In order to define Karp-Rabin fingerprints, we use a prime number $q = \Theta(N^{c})$ for some constant $c > 1$, and a value $b \in [q, 2q)$ chosen uniformly at random. 
The Karp-Rabin fingerprint $\phi_i(x,y)$ of the substring $s_i[x..y]$ is defined as follows:
\begin{equation}
\phi_i(x,y) = \sum_{z=x}^y s_i[z] \cdot b^{y-z} \mod q
\end{equation}

Observe that equal substrings have equal fingerprints, i.e.\ for every integer $n \ge 0$ it holds $s_i[x..x+n] = s_j[y..y+n] \implies \phi_i(x,x+n) = \phi_j(y,y+n)$. 
On the other hand, if two substrings are not equal, their fingerprints will be different with high probability. 
In particular, if $s_i[x..x+n] \ne s_j[y..y+n]$ then $\textsf{Prob}[\phi_i(x,x+n) = \phi_j(y,y+n)] \le \frac{n+1}{q} = \OhOf{N^{1-c}}$. Thus, by choosing a large enough constant $c>1$, we can control the probability of false positives when comparing fingerprints instead of substrings. 
Using the all-prefix-operation, Karp-Rabin fingerprints can be computed efficiently in parallel:

\begin{lemma}\label{lemma:fingerprint}
\let\oldell\ell
\renewcommand{\ell}{\oldell}
For every $\ell$-character substring $s_i[x..x+\ell - 1]$, the Karp-Rabin fingerprint $\phi_i(x,x+\ell-1)$ can be computed in $\OhOf{\ell}$ work, $O(\ell)$ space, and $\OhOf{\lg \ell}$ time in the EREW model.
\end{lemma}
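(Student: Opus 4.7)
The plan is to express the fingerprint as the sum of $\ell$ independent products and to realize both the powering and the summation through two applications of the all-prefix-operation from \cref{lemma:prefixop}. Writing $j = z - x$, we have
\begin{equation}
\phi_i(x, x+\ell-1) \enskip=\enskip \sum_{j=0}^{\ell-1} s_i[x+j] \cdot b^{\ell-1-j} \pmod{q},
\end{equation}
so it suffices to (i) tabulate the powers $b^0, b^1, \dots, b^{\ell-1}$ modulo $q$, (ii) form the $\ell$ products in parallel, and (iii) add them modulo $q$. Since $q = \Theta(N^c)$ for a constant $c$, each value fits into $O(1)$ machine words, so addition and multiplication modulo $q$ are constant-time primitive operations.

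First I would broadcast the single value $b$ into $\ell$ distinct cells of a length-$\ell$ array $A$ using the standard EREW doubling trick (in step $t$, the $2^{t-1}$ cells already containing $b$ are each read by a distinct processor that writes $b$ into a fresh empty cell), which takes $O(\lg \ell)$ time and $O(\ell)$ work without concurrent reads. After overwriting $A[0]$ with $1$, I apply the all-prefix-operation of \cref{lemma:prefixop} with multiplication modulo $q$ as the associative operator. The resulting array $B$ then satisfies $B[j] = b^j \bmod q$ for all $j \in [0, \ell-1]$, at a cost of $O(\ell)$ work, $O(\ell)$ space, and $O(\lg \ell)$ time in EREW.

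Next, I allocate an array $C$ of length $\ell$ and have processor $j$ read $s_i[x+j]$ and $B[\ell-1-j]$ (each cell touched by exactly one processor, so EREW-compliant), compute $C[j] \gets s_i[x+j] \cdot B[\ell-1-j] \bmod q$, and write the result back. This phase is $O(1)$ time and $O(\ell)$ work. Finally, I invoke \cref{lemma:prefixop} once more on $C$, this time using addition modulo $q$ as the associative operator; the last entry of the resulting sequence is exactly $\phi_i(x, x+\ell-1)$. This again takes $O(\ell)$ work, $O(\ell)$ space, and $O(\lg \ell)$ time.

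The only subtle point is to keep every step EREW-compliant despite the fact that $b$ and $q$ are conceptually global constants. The doubling broadcast addresses this for $b$; the modulus $q$ is only used inside the local arithmetic of each processor, so one can simply store a private copy together with each processor's state (created during an initial one-time $O(\lg \ell)$-time, $O(\ell)$-work broadcast if one insists on a pure EREW register model). Summing the three phases yields the claimed $O(\ell)$ work, $O(\ell)$ space, and $O(\lg \ell)$ time bounds.
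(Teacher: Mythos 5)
Your proof is correct and takes essentially the same approach as the paper: compute the powers $b^0,\dots,b^{\ell-1}\bmod q$ via one all-prefix-operation with multiplication over $\Zq$, form the $\ell$ products $f_j = s_i[x+j]\cdot b^{\ell-1-j}\bmod q$ in constant parallel time, and sum them via a second all-prefix-operation. The extra care you take with broadcasting $b$ under the EREW read restriction is a detail the paper elides but does not change the argument.
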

\begin{proof}
First, we compute the sequence of exponents $b^0, b^1, \dots, b^{\ell-1} \pmod q$ using the all-prefix-operation with multiplication over $\Zq$ as the associative \mbox{operator}. 
Then, we simultaneously compute all values $f_0, \dots, f_{\ell - 1}$ with $f_j = s_i[x + j] \cdot b^{\ell - j - 1} \pmod q$ in constant time.
Finally, the Karp-Rabin fingerprint $\phi_i(x, x+\ell-1)$ is the sum of all the $f_j$ over $\Zq$, which can be computed via another all-prefix-operation.
The stated complexity bounds follow from \cref{lemma:prefixop}.\qed
\end{proof}

During round $r$ of our approximation scheme, we can simultaneously compute the fingerprints of all length-$2^r$ prefixes, which takes $\OhOf{k_r \cdot 2^r}$ work and ${\OhOf{r} \subseteq \OhOf{\lg d}}$ time. 
It remains to be shown how to semisort the fingerprints.
For now, similarly to \cref{sec:derive:comp}, we skip the first $\ceil{\lg \lg k}$ rounds. 
In the remaining rounds, we use Cole's parallel merge sort \cite{cole:mergesort}, which sorts the $k_r$ fingerprints in round $r$ in $\OhOf{k_r \lg k_r} \subseteq \OhOf{k_r \cdot 2^r}$ work and $\OhOf{\lg k_r}$ time.
This results in the following complexity bounds:

\begin{lemma}\label{lemma:lrelax}
	For any constant $c > 0$, the array $\Lrelax$ with $\Lrelax[i] \in [\ell_i, 2 \cdot \max(\lg k, \ell_i))$ can be computed in the EREW model in $\OhOf{D}$ work and $\OhOf{\lg d \cdot (\lg d + \lg k)}$ time w.h.p.\ $1 - (1/N)^c$.
\end{lemma}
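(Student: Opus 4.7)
The plan is to instantiate the approximation scheme of \cref{sec:generalidea} exactly as sketched just before the lemma statement: we skip the first $\ceil{\lg\lg k}$ rounds, and in each subsequent round $r$ the semisorting phase is realized via Karp-Rabin fingerprints. Concretely, in round $r$ I would (i) compute, in parallel, a fingerprint of every surviving length-$2^r$ prefix via \cref{lemma:fingerprint}; (ii) apply Cole's parallel merge sort \cite{cole:mergesort} to the $k_r$ fingerprints in order to cluster equal ones together; and (iii) execute the compaction phase of \cref{sec:generalidea} unchanged. Skipping the first $\ceil{\lg\lg k}$ rounds is safe because it enlarges each entry of $\Lrelax$ by at most an additive $2\lg k$, which is precisely the $\max(\lg k, \ell_i)$ slack allowed by the statement.

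For the work analysis, I would argue that fingerprinting and compaction each cost $\OhOf{k_r \cdot 2^r}$ in round $r$. Since we only run rounds with $r \geq \ceil{\lg\lg k}$, we have $\lg k_r \leq \lg k \leq 2^r$, so Cole's merge sort contributes only $\OhOf{k_r \lg k_r} \subseteq \OhOf{k_r \cdot 2^r}$ work as well. Summing the three contributions over the at most $\ceil{\lg \ellmax}$ executed rounds and invoking \cref{eqn:optimalwork} yields $\OhOf{D}$.

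For the time analysis, the three pieces of each round contribute $\OhOf{r} \subseteq \OhOf{\lg \ellmax}$ for fingerprinting, $\OhOf{\lg k_r} \subseteq \OhOf{\lg k}$ for Cole's merge sort, and $\OhOf{r + \lg k_r} \subseteq \OhOf{\lg \ellmax + \lg k}$ for compaction (the last by the end-of-section analysis in \cref{sec:generalidea}). Multiplying by the $\OhOf{\lg \ellmax}$ rounds gives $\OhOf{\lg \ellmax \cdot (\lg \ellmax + \lg k)}$.

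The main obstacle is the success probability. The algorithm errs only if, in some round, two \emph{distinct} length-$2^r$ prefixes produce the same fingerprint and are therefore merged into one semisorted group; fingerprint equality of genuinely equal substrings is guaranteed deterministically, so there is no failure in the other direction. For a single distinct pair, the Karp-Rabin guarantee bounds the collision probability by $(2^r + 1)/q = \OhOf{N^{1-c'}}$, where $c' > 1$ is the constant in $q = \Theta(N^{c'})$. The total number of fingerprint comparisons performed across all rounds is bounded by $\OhOf{\sum_r k_r \lg k_r} \subseteq \OhOf{D}$, which is polynomial in $N$, so a union bound yields overall failure probability $\OhOf{N^{2 - c'}}$. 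Choosing $c'$ sufficiently larger than $c$ drives this below $N^{-c}$; since $c'$ only enters as a constant factor in the fingerprint arithmetic, neither the work nor the time bound is affected.
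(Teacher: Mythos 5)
Your proposal matches the paper's approach exactly: the paper does not give a separate proof of this lemma, but the paragraph immediately preceding it describes precisely the construction you use (skip the first $\ceil{\lg\lg k}$ rounds, fingerprint the surviving length-$2^r$ prefixes via \cref{lemma:fingerprint}, semisort with Cole's merge sort, then run the compaction phase), and your work and time accounting coincides with the paper's. One small polish worth making: union-bounding over the \emph{comparisons} performed by merge sort is slightly delicate, since the set of comparisons actually executed depends on the random choice of $b$; it is cleaner to union-bound over all pairs of distinct length-$2^r$ prefixes across all rounds, of which there are at most $k^2\lg d = N^{\OhOf{1}}$, which yields the same conclusion once $c'$ is chosen large enough.
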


Now we can already derive a $D$-aware modification of the randomized string sorter by J\'aJ\'a et al \cite{DBLP:journals/tcs/JaJaRV96}. 
Just as in \cref{sec:derive:comp}, we simply compute $\Lrelax$ (using \cref{lemma:lrelax}), and then run the original string sorter. 
It follows:

\begin{theorem}\label{lemma:jajarand}
	For any constant $c > 0$, a set of $k$ strings with distinguishing prefix size $D$ and longest relevant prefix of length $d$ can be sorted in the Common-CRCW model in $\OhOf{k \lg k + D}$ work and $\OhOf{\lg d \cdot (\lg d + \lg k)}$ time w.h.p.\ $1 - (1/k)^c$.
\end{theorem}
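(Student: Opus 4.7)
The plan is to mirror the derivation of \cref{lemma:jajaplain}, replacing the deterministic semisorter inside the approximation scheme with the Karp--Rabin based one that gives us \cref{lemma:lrelax}, and then finishing with the randomized string sorter of J\'aJ\'a et al.\ rather than their deterministic one. Concretely, I would first run \cref{lemma:lrelax} with the constant set to some $c' > 0$ to be fixed later. Since \cref{lemma:lrelax} is stated in the EREW model, the same bounds hold in the stronger Common-CRCW model: we obtain in $\OhOf{D}$ work and $\OhOf{\lg d \cdot (\lg d + \lg k)}$ time an array $\Lrelax$ with $\Lrelax[i] \in [\ell_i, 2\cdot\max(\lg k, \ell_i))$, and this computation succeeds with probability at least $1 - (1/N)^{c'}$.

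Second, in constant time and $\OhOf{k}$ work I would replace each $s_i$ by its length-$\Lrelax[i]$ prefix $s'_i$. As already observed in \cref{sec:derive:comp}, the truncated strings preserve the lexicographic order and have total length $D' = \sum_i \Lrelax[i] < 2 k \lg k + 2 D$. Third, I would feed these $k$ truncated strings to the randomized sorter of J\'aJ\'a et al.\ in the Common-CRCW model, with its internal constant chosen to be some $c'' > 0$; this sorts them in $\OhOf{k \lg k + D'} = \OhOf{k \lg k + D}$ work and $\OhOf{\lg k}$ time with probability at least $1 - (1/k)^{c''}$. Summing the two stages, the total work is $\OhOf{k \lg k + D}$ and the total time is $\OhOf{\lg d \cdot (\lg d + \lg k) + \lg k} = \OhOf{\lg d \cdot (\lg d + \lg k)}$, matching the claim.

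The only nontrivial step is the probability bookkeeping, and it is precisely this step that motivates using the fingerprint-based semisorter instead of trying to plug J\'aJ\'a et al.'s randomized sorter directly into each round of the approximation scheme (as the paragraph just before \cref{sec:fingerprints} explains, that naive attempt breaks down because the per-round success probability degrades with the shrinking $k_r$). By a union bound, both stages succeed simultaneously with probability at least $1 - (1/N)^{c'} - (1/k)^{c''}$. Using $N \geq k$ we have $(1/N)^{c'} \leq (1/k)^{c'}$, so choosing $c'$ and $c''$ each at least $c + 1$ yields the desired success probability $1 - (1/k)^c$, which completes the proof plan.
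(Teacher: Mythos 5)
Your proposal follows exactly the paper's route: compute $\Lrelax$ via \cref{lemma:lrelax}, truncate as in \cref{sec:derive:comp} to total length $D' < 2k\lg k + 2D$, and then invoke the randomized sorter of J\'aJ\'a et al.\ once on the truncated strings. The paper leaves the probability bookkeeping implicit, and your union-bound argument (using $N \geq k$ and choosing the internal constants at least $c+1$) is a correct way to fill that in.
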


\subsection{Handling the Initial \boldmath$\ceil{\lg\lg k}$ \unboldmath Rounds}

Finally, we show how to (semi-)sort the fingerprints in the first $\ceil{\lg\lg k}$ rounds. 
Ideally, we would like to use the randomized semisorter by Gu et al.\cite{DBLP:conf/spaa/GuSSB15}, which sorts $k_r$ fingerprints in the Arbitrary-CRCW model in expected optimal $\OhOf{k_r}$ work and $\OhOf{\lg k_r}$ time with high probability $1 - (1/k_r)^c$. 
However, as in the previous section, $k_r$ and thus the probability of success can become arbitrarily small. 
Therefore, we only use the semisorter by Gu et al.\ in rounds when $k_r > k / \lg^2 k$ (resulting in $\OhOf{k_r}$ work), and Cole's mergesort, otherwise (resulting in $\OhOf{k / \lg k}$ work). 
This way, in every round the expected work for semisorting fingerprints is $\OhOf{k_r + k / \lg k}$, the time is $\OhOf{\lg k}$, and the probability of success is at least $1 - (\lg^2 k/k)^c > 1 - (1/k)^{(c/2)}$. 
Summing up the expected work for semisorting during the first $\ceil{\lg \lg k}$ rounds, we have:

\begin{equation*}
	\sum_{r = 1}^{\ceil{\lg \lg k}} k_r + \sum_{r = 1}^{\ceil{\lg \lg k}} k / \lg k = \sum_{r = 1}^{\ceil{\lg \lg k}} k_r + o(k) = \OhOf{D}.
\end{equation*}

Together with the bounds for computing fingerprints (see \cref{sec:fingerprints}) and for the compaction phase (see \cref{sec:generalidea}), we get:

\begin{lemma}\label{lemma:lnormalgu}
	For any constant $c > 0$, the array $L$ with $L[i] \in [\ell_i, 2\ell_i)$ can be computed in the Arbitrary-CRCW model in expected optimal $\OhOf{D}$ work and $\OhOf{\lg d \cdot (\lg d + \lg k)}$ time w.h.p.\ $1 - (1/k)^c$.
\end{lemma}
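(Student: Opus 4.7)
The plan is to combine the randomized fingerprinting machinery of \cref{sec:fingerprints} with the hybrid semisorting strategy sketched in the paragraph preceding the lemma, and then carefully sum the per-round costs over all $\lceil \lg d \rceil + 1$ rounds. In each round $r$, I would first compute, in parallel for every surviving string, the Karp-Rabin fingerprint of its length-$2^r$ prefix via \cref{lemma:fingerprint}. This costs $\OhOf{k_r \cdot 2^r}$ work and $\OhOf{r}$ time. Then I would semisort the $k_r$ resulting fingerprints, using the strategy that depends on the round index.

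For rounds $r \geq \lceil \lg \lg k \rceil$ I would reuse the argument that already underlies \cref{lemma:lrelax}: Cole's merge sort on the fingerprints costs $\OhOf{k_r \lg k_r} \subseteq \OhOf{k_r \cdot 2^r}$ work and $\OhOf{\lg k}$ time, and these contributions telescope via \cref{eqn:optimalwork} to $\OhOf{D}$ total work. For the initial $\lceil \lg\lg k \rceil$ rounds I would apply the hybrid rule stated just before the lemma: invoke Gu et al.'s semisorter when $k_r > k/\lg^2 k$ (yielding $\OhOf{k_r}$ expected work and success probability at least $1 - (\lg^2 k / k)^c$), and fall back to Cole's merge sort otherwise (yielding $\OhOf{k/\lg k}$ deterministic work). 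In either case the per-round time is $\OhOf{\lg k}$ and the per-round expected work is $\OhOf{k_r + k/\lg k}$. The key arithmetic is the one already displayed in the section: summing over the $\lceil \lg\lg k \rceil$ initial rounds gives $\OhOf{D + (k/\lg k)\cdot \lg\lg k} = \OhOf{D}$, since $D \geq k$.

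Adding the compaction costs of \cref{sec:generalidea} (also $\OhOf{D}$ work and $\OhOf{\lg d + \lg k}$ time per round) gives total expected work $\OhOf{D}$, matching the optimal lower bound of \cref{eqn:optimalwork}. The time bound follows because each round contributes $\OhOf{\lg d + \lg k}$ and there are $\OhOf{\lg d}$ rounds overall, yielding $\OhOf{\lg d \cdot (\lg d + \lg k)}$. For correctness of $L[i] \in [\ell_i, 2\ell_i)$ one relies on fingerprint equality faithfully representing prefix equality; choosing the prime $q = \Theta(N^c)$ for a sufficiently large constant and union-bounding over the $\OhOf{\mathrm{poly}(N)}$ fingerprint comparisons made across all rounds makes the probability of a false positive at most $(1/k)^c$.

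The main obstacle is the probability bookkeeping in the regime $k_r \leq k/\lg^2 k$, where Gu et al.'s success guarantee $1-(1/k_r)^c$ degrades to a useless bound; the fix is precisely the fallback to Cole's mergesort in that regime, which trades randomness for a deterministic $\OhOf{k/\lg k}$ cost per round that is still absorbable into $\OhOf{D}$. A secondary delicate point is ensuring that the union bound over fingerprint comparisons and over the $\OhOf{\lg\lg k}$ invocations of Gu et al.'s semisorter can be tuned, by choosing the constants in $q$ and in the semisorter's guarantee, to yield a combined failure probability of at most $(1/k)^c$ for any prescribed constant $c$.
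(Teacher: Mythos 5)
Your proposal is correct and matches the paper's proof essentially line for line: Karp-Rabin fingerprints via \cref{lemma:fingerprint}, Cole's merge sort on fingerprints in rounds $r \geq \lceil\lg\lg k\rceil$ absorbed by \cref{eqn:optimalwork}, and the same $k_r > k/\lg^2 k$ threshold for switching between Gu et al.'s semisorter and Cole's merge sort in the initial rounds, with the same $\OhOf{k_r + k/\lg k}$ per-round expected work and the same probability bookkeeping. The only cosmetic difference is that you quote the per-round time as $\OhOf{\lg d + \lg k}$ whereas in the Arbitrary-CRCW model the compaction's prefix-equality check is $\OhOf{1}$ and the all-prefix-sums is $\OhOf{\lg k / \lg\lg k}$; this slight overestimate is harmless and is swallowed by the stated bound.
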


In the weaker EREW model, we can replace the semisorter by Gu et al.\ with the deterministic integer sorter by Han and Shen \cite{Han2002} that we already used in the proof of \cref{lemma:hagerupgeneral}. This results in the following bounds:

\begin{lemma}\label{lemma:lnormalhan}
	For any constant $c > 0$, the array $L$ with $L[i] \in [\ell_i, 2\ell_i)$ can be computed in the EREW model in $\OhOf{k \sqrt{\lg k}\lg\lg k + D}$ work and $\OhOfB{\lg d \cdot (\lg d + \lg k) + \lg^{3/2} k \cdot \lg \lg k}$ time w.h.p.\ $1 - (1/N)^c$.
\end{lemma}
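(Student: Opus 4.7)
The plan is to mimic the proof of \cref{lemma:lnormalgu}, but to replace the Arbitrary-CRCW semisorter of Gu et al.\ with an EREW integer sorter that still gives a good work bound for small instances. I would run all $\lceil \lg d \rceil + 1$ rounds of the approximation scheme from \cref{sec:generalidea}. In every round $r$ the work breaks into (i) computing the length-$2^r$ Karp-Rabin fingerprints of the $k_r$ surviving prefixes via \cref{lemma:fingerprint}, which costs $\OhOf{k_r \cdot 2^r}$ work and $\OhOf{r}$ time, (ii) semisorting these fingerprints, which is the only nontrivial design choice, and (iii) the compaction phase, whose cost is already absorbed into the $\OhOf{D}$ and $\OhOf{\lg d \cdot (\lg d + \lg k)}$ bounds derived in \cref{sec:generalidea}.

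For the later rounds $r \geq \lceil \lg \lg k \rceil$, the strategy used in \cref{lemma:lrelax} works verbatim in the EREW model: Cole's parallel mergesort \cite{cole:mergesort} semisorts $k_r$ fingerprints in $\OhOf{k_r \lg k_r}$ work and $\OhOf{\lg k_r}$ time; since $\lg k_r \leq \lg k \leq 2^r$ in these rounds, the work is dominated by $k_r \cdot 2^r$. Summed together with fingerprinting and compaction, this contributes $\OhOf{D}$ total work (by \cref{eqn:optimalwork}) and $\OhOf{\lg d \cdot (\lg d + \lg k)}$ total time, exactly matching the bounds of \cref{lemma:lrelax}.

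In the initial $\lceil \lg \lg k \rceil$ rounds, the $k_r \lg k_r$ term of Cole's mergesort is no longer dominated by $k_r \cdot 2^r$, so it would blow up the work budget. I would instead invoke Han and Shen's deterministic EREW integer sorter \cite{Han2002}, which sorts the $k_r$ fingerprints in $\OhOf{k_r \sqrt{\lg k_r}} \subseteq \OhOf{k \sqrt{\lg k}}$ work and $\OhOf{\lg^{3/2} k_r} \subseteq \OhOf{\lg^{3/2} k}$ time. Over $\lceil \lg \lg k \rceil$ initial rounds this contributes $\OhOf{k \sqrt{\lg k} \cdot \lg \lg k}$ work and $\OhOf{\lg^{3/2} k \cdot \lg \lg k}$ time, while the fingerprinting and compaction costs of these rounds are again subsumed by the $\OhOf{D}$ and $\OhOf{\lg d \cdot (\lg d + \lg k)}$ bounds from \cref{sec:generalidea}. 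Adding the two phases yields the claimed $\OhOf{k \sqrt{\lg k}\lg\lg k + D}$ work and $\OhOf{\lg d \cdot (\lg d + \lg k) + \lg^{3/2} k \cdot \lg \lg k}$ time.

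The only mildly delicate point, and the place where I expect to have to spend a sentence or two, is the failure probability. Han and Shen's sorter is deterministic, so the entire risk of error still lies in Karp-Rabin collisions, exactly as in \cref{lemma:lrelax}. Across all rounds there are $\OhOf{N}$ pairs of adjacent fingerprints ever compared, and each collides with probability $\OhOf{N^{-c'}}$ for a constant $c'$ we can choose arbitrarily by tuning the prime $q$. A union bound over these events then upgrades the per-comparison guarantee to the claimed overall success probability $1 - (1/N)^c$ for any desired constant $c > 0$, completing the argument.
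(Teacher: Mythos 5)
Your proposal is correct and follows essentially the same route as the paper: the paper's proof of this lemma is the single remark that one replaces the Gu et al.\ semisorter in the initial $\lceil\lg\lg k\rceil$ rounds of the \cref{lemma:lnormalgu} construction with Han and Shen's deterministic EREW integer sorter, and you have simply spelled out the resulting work and time accounting (with the later rounds handled by Cole's mergesort as in \cref{lemma:lrelax}). One tiny quibble: a single fingerprint collision occurs with probability $\OhOf{N^{1-c'}}$ (not $\OhOf{N^{-c'}}$) when $q = \Theta(N^{c'})$, but since $c'$ is a tunable constant this does not affect the union-bound conclusion.
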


Note that the probability of success is $1 - (1/N)^c$ (rather than $1 - (1/k)^c$ as in \cref{lemma:lnormalgu}) because we no longer use a probabilistic semisorter, and errors can only occur due to fingerprint collisions.

\cref{lemma:lnormalgu,lemma:lnormalhan} directly imply the results stated in \cref{tbl:results}(c).

\section{Conclusion and Open Questions}
\label{sec:conclude}

We presented a theoretical framework that approximates the distinguishing prefix, resulting in the first $D$-aware string sorters in the PRAM model. 
It remains an open question, if the $\lg d$ time factor can be avoided without increasing the work. 
Generally, it is unknown if a constant approximation of the distinguishing prefix can be computed \emph{deterministically} in optimal $\OhOf{D}$ work and reasonable time.

\bibliographystyle{splncs04}
\bibliography{lcp-aware-pss}

\end{document}